\newtheorem{theorem}{Theorem}[section]
\newtheorem{lemma}[theorem]{Lemma}
\theoremstyle{definition}
\newtheorem{remark}[theorem]{Remark}
\newenvironment{mylist}[1]{\begin{list}{}{
	\setlength{\leftmargin}{#1}
	\setlength{\rightmargin}{0mm}
	\setlength{\labelsep}{2mm}
	\setlength{\labelwidth}{8mm}
	\setlength{\itemsep}{0mm}}}
	{\end{list}}
\newcommand{\tinyspace}{\mspace{1mu}}
\newcommand{\op}[1]{\operatorname{#1}}
\newcommand{\abs}[1]{\left\lvert\tinyspace #1 \tinyspace\right\rvert}
\newcommand{\ceil}[1]{\left\lceil #1 \right\rceil}
\newcommand{\norm}[1]{\left\lVert\tinyspace #1 \tinyspace\right\rVert}
\newcommand{\snorm}[1]{\lVert\tinyspace#1\tinyspace\rVert}
\newcommand{\tr}{\operatorname{Tr}}
\newcommand{\ip}[2]{\left\langle #1 , #2\right\rangle}
\newcommand{\fid}{\operatorname{F}}
\newcommand{\setft}[1]{\mathrm{#1}}
\newcommand{\lin}[1]{\setft{L}\left(#1\right)}
\newcommand{\density}[1]{\setft{D}\left(#1\right)}
\newcommand{\unitary}[1]{\setft{U}\left(#1\right)}
\newcommand{\pos}[1]{\setft{Pos}\left(#1\right)}
\newcommand{\class}[1]{\textup{#1}}
\def\complex{\mathbb{C}}
\def\I{\mathbb{1}}
\def\({\left(}
\def\){\right)}
\def\X{\mathcal{X}}
\def\Y{\mathcal{Y}}
\def\Z{\mathcal{Z}}
\def\W{\mathcal{W}}
\def\yes{\text{yes}}
\def\no{\text{no}}
\begin{document}

\title{\bf Two-message quantum interactive proofs are in PSPACE}

\author{%
  Rahul Jain\thanks{Department of Computer Science and Centre for
    Quantum Technologies, National University of Singapore.}
  \and
  Sarvagya Upadhyay\thanks{%
    School of Computer Science and Institute for Quantum Computing,
    University of Waterloo.}
  \and
  John Watrous${}^{\dagger}$
}

\date{May 8, 2009}

\maketitle

\vspace{-5mm}

\begin{abstract}
  We prove that $\class{QIP}(2)$, the class of problems having
  two-message quantum interactive proof systems, is a subset of
  $\class{PSPACE}$.
  This relationship is obtained by means of an efficient parallel
  algorithm, based on the multiplicative weights update method,
  for approximately solving a certain class of semidefinite programs.
\end{abstract}

\section{Introduction}

Since their introduction roughly 25 years ago
\cite{Babai85,GoldwasserMR85}, interactive
proof systems have become a fundamental notion in the theory of
computational complexity.
The expressive power of one of the most basic variant of the
interactive proof system model, wherein a polynomial-time
probabilistic verifier interacts with a computationally unbounded
prover for a polynomial number of rounds, is characterized
\cite{LundFKN92,Shamir92} by the well-known relationship
\[
\class{IP} = \class{PSPACE}.
\]
Many variants of interactive proof systems have been studied,
including public-coin interactive proof systems (or Arthur--Merlin
games) \cite{Babai85,BabaiM88,GoldwasserS89}, zero-knowledge
interactive proofs \cite{GoldwasserMR89,GoldreichMW91} and multi-prover
interactive proofs \cite{Ben-OrGKW88}.

This paper is concerned primarily with quantum interactive proof
systems, which are defined in a similar way to ordinary interactive
proof systems except that the prover and verifier may perform quantum
computations.
Like their classical analogues, several variants of quantum
interactive proof systems have been studied, including ordinary
quantum interactive proofs \cite{Watrous03-pspace,KitaevW00},
public-coin quantum interactive proofs \cite{MarriottW05},
zero-knowledge quantum interactive proofs
\cite{Watrous02,Watrous06,Kobayashi08,HallgrenKSZ08}, and multi-prover
quantum interactive proofs \cite{KobayashiM03,KempeKMV08}.
The complexity class $\class{QIP}$ of problems having quantum
interactive proof systems is known \cite{KitaevW00} to satisfy
\[
\class{PSPACE} \subseteq \class{QIP} \subseteq \class{EXP}.
\]
The containment $\class{QIP}\subseteq\class{EXP}$ follows from the
existence of polynomial-time algorithms for approximately solving
semidefinite programs \cite{GrotschelLS93}.
(Somewhat simpler proofs of the containment
$\class{QIP}\subseteq\class{EXP}$ follow from the results of
\cite{Ben-AroyaT09,Watrous09-cb-sdp}, but these proofs still require
efficient algorithms for solving convex/semidefinite programs.)

Quantum interactive proof systems have an interesting property that
classical interactive proof systems are conjectured not to hold, which
is that they can be parallelized to a constant number of rounds of
interaction \cite{KitaevW00}.
More precisely, it holds that $\class{QIP}(3) = \class{QIP}$, where
in general $\class{QIP}(m)$ denotes the class of problems having
quantum interactive proof systems in which $m$ messages are exchanged
between the prover and verifier (with the prover always sending the
last message).
This leaves four basic classes that are defined naturally by quantum
interactive proof systems: $\class{QIP}(0) = \class{BQP}$,
$\class{QIP}(1) = \class{QMA}$, $\class{QIP}(2)$, and $\class{QIP}(3)
= \class{QIP}$.
Of these classes, $\class{QIP}(2)$ seems to be the most mysterious.
It is known that $\oplus\class{MIP}^{\ast} \subseteq \class{QIP}(2)$
\cite{Wehner06} and $\class{QSZK}\subseteq\class{QIP}(2)$ 
\cite{Watrous02,Watrous06}.
Here, $\oplus\class{MIP}^{\ast}$ denotes the class of problems having
one-round two-prover classical interactive proof systems in which the
provers share quantum entanglement, answer one bit each, and the
verifier accepts or rejects based on the parity of these bits;
and $\class{QSZK}$ denotes the class of problems having statistical
zero-knowledge quantum interactive proof systems.
No upper bound other than the trivial containment $\class{QIP}(2)
\subseteq \class{QIP}$, which implies
$\class{QIP}(2)\subseteq\class{EXP}$, was previously known.

In this paper we prove the new containment:
\[
\class{QIP}(2) \subseteq \class{PSPACE}.
\]
Similar to $\class{QIP}\subseteq\class{EXP}$, this containment is
proved using semidefinite programming; but this time the containment
is achieved by using an $\class{NC}$ algorithm rather than a
sequential polynomial-time algorithm.
Our algorithm is based on the {\it multiplicative weights update}
method, which was developed by several researchers and is described in
the survey \cite{AroraHK05b} and in the PhD thesis of Kale
\cite{Kale07}.
In particular, our algorithm is based on a general method that was
independently discovered by Arora and Kale \cite{AroraK07} and
Warmuth and Kuzmin \cite{WarmuthK06}.
The key aspect of this approach that makes it useful for proving
$\class{QIP}(2) \subseteq \class{PSPACE}$ is its parallelizability:
it is an iterative method in which each iteration is easily
parallelized, and is such that only a very small number of iterations
is needed for an approximation that is accurate enough for our needs.
A related approach was used by two of us \cite{JainW09} to prove the
containment of a different quantum complexity class (called
$\class{QRG}(1)$) in $\class{PSPACE}$, but the specific technical
details of the simulations are rather different.

The rest of this paper has the following structure.
We begin with Section~\ref{sec:preliminaries}, which includes a brief
discussion of background information needed for the rest of the paper,
including linear algebra notation and parallel algorithms for matrix
computations.
Section~\ref{sec:qip2} introduces two-message quantum interactive
proof systems and establishes a simple fact concerning their
robustness with respect to error bounds.
In Section~\ref{sec:qip2-SDP} we present a semidefinite programming
formulation of the maximum probability with which a verifier in a
two-message quantum interactive proof system can be made to accept,
and the actual simulation of $\class{QIP}(2)$ in $\class{PSPACE}$ is split
into the three sections that follow:
Section~\ref{sec:overview} presents an overview of the simulation,
while Sections~\ref{sec:conditioning} and \ref{sec:Arora-Kale}
describe in more detail its two most technical parts.
The precision requirements of the entire simulation are discussed in
Section~\ref{sec:precision}, and the paper concludes with
Section~\ref{sec:conclusion}.

\section{Preliminaries} \label{sec:preliminaries}

\subsection{Linear algebra notation and terminology}

For complex vector spaces of the form $\X = \complex^N$ and $\Y =
\complex^M$, we write $\lin{\X,\Y}$ to denote the space of linear
operators mapping $\X$ to $\Y$, which is identified with the set of
$M\times N$ complex matrices in the usual way.
An inner product on $\lin{\X,\Y}$ is defined as
$\ip{A}{B} = \tr(A^{\ast} B)$ for all $A,B\in\lin{\X,\Y}$, where
$A^{\ast}$ denotes the adjoint (or conjugate transpose) of $A$.
The notation $\lin{\X}$ is shorthand for $\lin{\X,\X}$, and
the identity operator on $\X$ is denoted $\I_{\X}$ (or just $\I$ when
$\X$ is understood).

\pagebreak[3]

The following special types of operators are relevant to the paper:
\begin{mylist}{\parindent}
\item[1.]
An operator $A\in\lin{\X}$ is {\it Hermitian} if $A = A^{\ast}$.
We write
$\lambda(A) = (\lambda_1(A),\ldots,\lambda_N(A))$
to denote the vector of eigenvalues of $A$, sorted from largest to
smallest:
$\lambda_1(A) \geq \lambda_2(A) \geq \cdots \geq \lambda_N(A)$.

\item[2.]
An operator $P\in\lin{\X}$ is {\it positive semidefinite} if it is
Hermitian and all of its eigenvalues are nonnegative.
The set of such operators is denoted $\pos{\X}$.
The notation $P\geq 0$ also indicates that $P$ is positive
semidefinite, and more generally the notations $A\leq B$ and $B\geq A$
indicate that $B - A\geq 0$ for Hermitian operators $A$ and $B$.

\item[3.]
A positive semidefinite operator $\Pi\in\pos{\X}$ is a
{\it projection} if all of its eigenvalues are either 0 or 1.
(Sometimes such an operator is called an {\it orthogonal projection},
but we have no need to discuss the more general sort of projection.)

\item[4.]
An operator $\rho\in\lin{\X}$ is a {\it density operator} if it is
both positive semidefinite and has trace equal to 1.
The set of such operators is denoted $\density{\X}$.

\item[5.]
An operator $U\in\lin{\X}$ is {\it unitary} if $U^{\ast} U = \I_{\X}$.
The set of such operators is denoted $\unitary{\X}$.
\end{mylist}

Three operator norms are discussed in this paper: the 
{\it trace norm}, {\it Frobenius norm}, and {\it spectral norm},
defined as
\[
\norm{A}_1 = \tr\sqrt{A^{\ast} A}\:,
\quad
\norm{A}_2 = \sqrt{\ip{A}{A}}\:,
\quad\text{and}\quad
\norm{A} = \max\{
\norm{Au}\,:\,u\in\X,\,\norm{u} = 1\}
\]
respectively, for each $A \in \lin{\X}$.
Alternately, these norms are given by the 1, 2 and $\infty$ norms
of the vector of singular values of $A$.
For every operator $A$ it holds that
$\norm{A} \leq \norm{A}_2 \leq \norm{A}_1$.
We also use the inequalities $\abs{\ip{A}{B}}\leq\norm{A}\norm{B}_1$,
$\norm{AB} \leq \norm{A} \norm{B}$, and
$\norm{AB}_1 \leq \norm{A} \norm{B}_1$ a few times in the paper.
The {\it fidelity function} is defined as
\[
\fid(P,Q) = \norm{\sqrt{P}\sqrt{Q}}_1
\]
for positive semidefinite operators $P$ and $Q$ of equal dimension.

A {\it super-operator} is a linear mapping of the form
$\Phi:\lin{\X}\rightarrow\lin{\Y}$, for spaces of the form 
$\X = \complex^N$ and $\Y = \complex^M$.
The identity super-operator on $\lin{\X}$ is denoted $\I_{\lin{\X}}$.
The adjoint super-operator to $\Phi$ is the unique
super-operator $\Phi^{\ast}: \lin{\Y} \rightarrow \lin{\X}$ for which
$\ip{Y}{\Phi(X)}=\ip{\Phi^{\ast}(Y)}{X}$ for all $X\in\lin{\X}$ and
$Y\in\lin{\Y}$.

The following special types of super-operators are relevant to the
paper.
\begin{mylist}{\parindent}
\item[1.]
  $\Phi:\lin{\X}\rightarrow\lin{\Y}$ is {\it completely positive}
  if it holds that
  $(\Phi\otimes\I_{\lin{\W}})(P) \in \pos{\Y\otimes\W}$
  for every choice of $\W = \complex^k$ and $P\in\pos{\X\otimes\W}$.
  
\item[2.]
  $\Phi:\lin{\X}\rightarrow\lin{\Y}$ is {\it trace-preserving} if
  $\tr(\Phi(X)) = \tr(X)$
  for every $X\in\lin{\X}$.

\item[3.]
  $\Phi:\lin{\X}\rightarrow\lin{\Y}$ is a {\it quantum operation}
  (also called an {\it admissible super-operator} or a {\it quantum
  channel}) if it is both completely positive and trace-preserving.

\end{mylist}

\subsection{Remarks on NC and parallel matrix computations}
\label{sec:NC}

To prove that $\class{QIP}(2)$ is contained in $\class{PSPACE}$, we
will make use of various facts concerning parallel computation.
First, let us recall the definition of two complexity classes based
on bounded-depth circuit families: $\class{NC}$ and
$\class{NC}(\mathit{poly})$.
The class $\class{NC}$ contains all functions that can be computed by
logarithmic-space uniform Boolean circuits of polylogarthmic depth,
while the class $\class{NC}(\mathit{poly})$ contains all functions
that can be computed by polynomial-space uniform families of Boolean
circuits having polynomial-depth.
By restricting these classes to predicates we obtain classes of
languages (or more generally promise problems).

There are a few facts about these classes that we will need.
The first fact, which follows from \cite{Borodin77}, is that for
languages (or promise problems) we have
$\class{NC}(\mathit{poly})\subseteq\class{PSPACE}$.
(In fact it holds that $\class{NC}(\mathit{poly}) = \class{PSPACE}$,
but we only need a containment in one direction.)
The second fact is that functions in these classes compose nicely.
In particular, if $F:\{0,1\}^{\ast} \rightarrow \{0,1\}^{\ast}$ is a
function in $\class{NC}(\mathit{poly})$ and
$G:\{0,1\}^{\ast} \rightarrow \{0,1\}^{\ast}$ is a function in
$\class{NC}$, then the composition $G\circ F$ is also in
$\class{NC}(\mathit{poly})$.
This follows from the most obvious way of composing the families of
circuits that compute $F$ and $G$, along with the observation that
$\abs{F(x)}$ can be at most exponential in $\abs{x}$.

Finally, we will make use of the fact that many computations involving
matrices can be performed by $\class{NC}$ algorithms.
We will restrict our attention to matrix computations on matrices whose
entries have rational real and imaginary parts.
Numbers of this form, $\alpha = (a/b) + i (c/d)$ for integers $a$, $b$,
$c$, and $d$, are sometimes referred to as Gaussian rationals.
We assume any number of this form is encoded as a 4-tuple $(a,b,c,d)$
using binary notation, so that the {\it length} of $\alpha$ is
understood to be the total number of bits needed for such an
encoding.

It is known that elementary matrix operations, such as additions,
multiplications, and inversions, can be performed in $\class{NC}$.
(The survey \cite{vzGathen93}, for instance, describes $\class{NC}$
algorithms for these tasks.)
We will also make use of the fact that matrix exponentials
and spectral decompositions can be approximated to high precision in
$\class{NC}$.
In more precise terms, we have that the following problems are in
$\class{NC}$:

\begin{center}
\underline{Matrix exponentials}\\[2mm]
\begin{tabular}{lp{5.5in}}
{\it Input:} &
An $n\times n$ matrix $M$, a positive rational number $\varepsilon$,
and an integer $k$ expressed in unary notation (i.e., $1^k$),
such that $\norm{M} \leq k$.\\[1mm]
{\it Output:} &
An $n\times n$ matrix $X$ such that 
$\norm{\exp(M) - X} < \varepsilon$.
\end{tabular}
\end{center}

\vspace{2mm}

\begin{center}
\underline{Spectral decompositions}\\[2mm]
\begin{tabular}{lp{5.5in}}
{\it Input:} &
An $n\times n$ Hermitian matrix $H$ and a positive rational number
$\varepsilon$.\\[1mm]
{\it Output:} &
An $n\times n$ unitary matrix $U$ and an $n\times n$ real diagonal matrix
$\Lambda$ such that
\[
\norm{M - U \Lambda U^{\ast}} < \varepsilon.
\]
\end{tabular}
\end{center}

\vspace{-8mm}

\begin{center}
\underline{Singular-value decompositions}\\[2mm]
\begin{tabular}{lp{5.5in}}
{\it Input:} &
An $n\times m$ matrix $M$ and a positive rational number
$\varepsilon$.\\[1mm]
{\it Output:} &
An $n\times r$ matrix $U$ with orthonormal columns,
an $m\times r$ matrix $V$ with orthonormal columns,
and an $r\times r$ diagonal matrix $\Sigma$ with positive diagonal entries
such that
\[
\norm{M - U \Sigma V^{\ast}} < \varepsilon.
\]
\end{tabular}
\end{center}

\vspace{-4mm}

\noindent
Note that in these problems, the description of $\varepsilon$ has
roughly $\log(1/\varepsilon)$ bits, which means that highly accurate
approximations are possible in $\class{NC}$.
The fact that matrix exponentials can be approximated in
$\class{NC}$ as claimed follows by truncating the series
\[
\exp(M) = \I + M + M^2/2 + M^3/6 + \cdots
\]
to a number of terms polynomial in $k$ and $\log(1/\varepsilon)$.
(This is not a very practial way to compute matrix exponentials, but
it establishes the fact we need.)
The fact that spectral and singular value decompositions can be
approximated in $\class{NC}$ follows from a composition of known
facts: in $\class{NC}$ one can compute characteristic polynomials and
null spaces of matrices, perform orthogonalizations of vectors, and
approximate roots of integer polynomials to high precision
\cite{Csanky76,BorodinGH82,BorodinCP83,BenOrFKT86,vzGathen93,Neff94}.

\section{Two-message quantum interactive proof systems}
\label{sec:qip2}

The purpose of this section is to introduce the class
$\class{QIP}(2)$, including its definition and a simple proof that it
is robust with respect to error bounds.
For a general discussion of quantum interactive proof systems, as
opposed to the somewhat simplified case in which only two messages are
exchanged, the reader is referred to \cite{KitaevW00} and
\cite{Watrous09-complexity}.

\subsection{Definition of two-message quantum interactive proofs}

To define the class $\class{QIP}(2)$, we begin by defining a
{\it two-message quantum verifier $V$} as a classical polynomial-time
algorithm that, on each input string $x$, outputs the
description of two quantum circuits: $U_x$ and $V_x$.
The circuit $U_x$ describes the verifier's initial preparation of a
state, part of which is sent to the prover, while the circuit $V_x$
describes the verifier's actions upon receiving a response from the
prover.
For the sake of simplicity, and without loss of generality, we
assume that for every input string $x$, the circuits $U_x$ and
$V_x$ are both composed of gates from some finite, universal set of
unitary quantum gates whose entries have rational real and imaginary
parts.
The number of qubits on which the circuits $U_x$ and $V_x$ act is 
assumed to be equal to $2 p(n)$, where $n = \abs{x}$ and $p$ is some 
polynomial-bounded function.
The first $p(n)$ qubits represent the communication channel between
the prover and verifier, while the remaining $p(n)$ qubits serve as
the private memory of the verifier.
(It is not really necessary that the number of message qubits and
private memory qubits agree, but it causes no change in the
computational power of the model.)

A two-message quantum prover $P$ is simply a collection of quantum
operations (or, equivalently, completely positive and trace preserving
super-operators)
$\{\Psi_x\,:\,x\in\{0,1\}^{\ast}\}$.
Such a prover is compatible with a given verifier $V$ if each
operation $\Psi_x$ acts on $p(n)$ qubits for the function $p$
mentioned above.

An interaction between a two-message verifier $V$ and a compatible
prover $P$ on an input $x$ proceeds as follows:
\begin{mylist}{\parindent}
\item[1.] $2p(n)$ qubits are initialized in the $\ket{0}$ state. 
\item[2.] The circuit $U_x$ is applied to all of the qubits.
\item[3.] The prover's operation $\Psi_x$ is applied to the first
  $p(n)$ qubits.
\item[4.] The circuit $V_x$ is applied to all of the qubits.
\item[5.] The first qubit is measured with respect to the standard
  basis, with the outcome 1 indicating {\it acceptance} and 0
  indicating {\it rejection}.
\end{mylist}
Figure~\ref{fig:qip2} illustrates such an interaction.
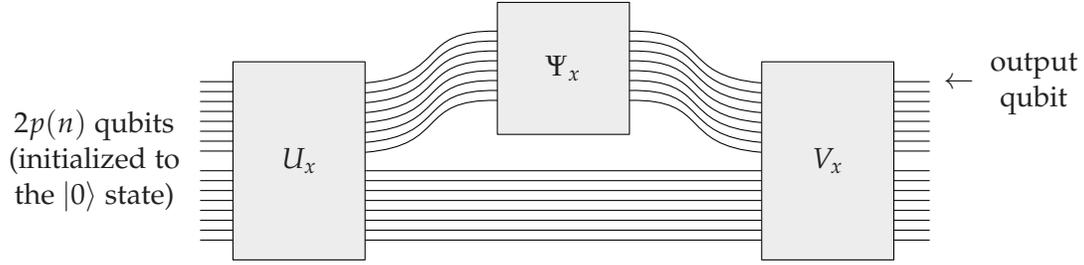
\begin{figure}[t]
  \begin{center}
    \unitlength=0.25pt
    \begin{picture}(1000, 410)(-28,50)
      \gasset{Nmr=0,AHnb=0}
      \node[Nw=200,Nh=300,fillgray=0.93](V1)(100,200){$U_x$}
      \node[Nw=200,Nh=300,fillgray=0.93](V2)(900,200){$V_x$}
      \node[Nframe=n,Nh=500,Nw=320](Nleft)(-210,200){%
        \begin{tabular}{c}
          $2p(n)$ qubits\\
          (initialized to\\
          the $\ket{0}$ state)
      \end{tabular}}

      \node[Nframe=n,Nh=500](Nright)(1100,200){}
      
      \drawedge[syo=-15,eyo=-15](V1,V2){}
      \drawedge[syo=-30,eyo=-30](V1,V2){}
      \drawedge[syo=-45,eyo=-45](V1,V2){}
      \drawedge[syo=-60,eyo=-60](V1,V2){}
      \drawedge[syo=-75,eyo=-75](V1,V2){}
      \drawedge[syo=-90,eyo=-90](V1,V2){}
      \drawedge[syo=-105,eyo=-105](V1,V2){}
      \drawedge[syo=-120,eyo=-120](V1,V2){}

      \drawedge[syo=-15,eyo=-15](V1,Nleft){}
      \drawedge[syo=-30,eyo=-30](V1,Nleft){}
      \drawedge[syo=-45,eyo=-45](V1,Nleft){}
      \drawedge[syo=-60,eyo=-60](V1,Nleft){}
      \drawedge[syo=-75,eyo=-75](V1,Nleft){}
      \drawedge[syo=-90,eyo=-90](V1,Nleft){}
      \drawedge[syo=-105,eyo=-105](V1,Nleft){}
      \drawedge[syo=-120,eyo=-120](V1,Nleft){}

      \drawedge[syo=-15,eyo=-15](V2,Nright){}
      \drawedge[syo=-30,eyo=-30](V2,Nright){}
      \drawedge[syo=-45,eyo=-45](V2,Nright){}
      \drawedge[syo=-60,eyo=-60](V2,Nright){}
      \drawedge[syo=-75,eyo=-75](V2,Nright){}
      \drawedge[syo=-90,eyo=-90](V2,Nright){}
      \drawedge[syo=-105,eyo=-105](V2,Nright){}
      \drawedge[syo=-120,eyo=-120](V2,Nright){}

      \drawedge[syo=15,eyo=15](V1,Nleft){}
      \drawedge[syo=30,eyo=30](V1,Nleft){}
      \drawedge[syo=45,eyo=45](V1,Nleft){}
      \drawedge[syo=60,eyo=60](V1,Nleft){}
      \drawedge[syo=75,eyo=75](V1,Nleft){}
      \drawedge[syo=90,eyo=90](V1,Nleft){}
      \drawedge[syo=105,eyo=105](V1,Nleft){}
      \drawedge[syo=120,eyo=120](V1,Nleft){}

      \drawedge[syo=15,eyo=15](V2,Nright){}
      \drawedge[syo=30,eyo=30](V2,Nright){}
      \drawedge[syo=45,eyo=45](V2,Nright){}
      \drawedge[syo=60,eyo=60](V2,Nright){}
      \drawedge[syo=75,eyo=75](V2,Nright){}
      \drawedge[syo=90,eyo=90](V2,Nright){}
      \drawedge[syo=105,eyo=105](V2,Nright){}
      \drawedge[syo=120,eyo=120](V2,Nright){}

      \node[Nw=200,Nh=200,fillgray=0.93](P)(500,340){$\Psi_x$}
     
      \drawbpedge[syo=15,eyo=-60](V1,-5,300,P,170,300){}
      \drawbpedge[syo=30,eyo=-45](V1,-5,300,P,170,300){}
      \drawbpedge[syo=45,eyo=-30](V1,-5,300,P,170,300){}
      \drawbpedge[syo=60,eyo=-15](V1,-5,300,P,170,300){}
      \drawbpedge[syo=75,eyo=0](V1,-5,300,P,170,300){}
      \drawbpedge[syo=90,eyo=15](V1,-5,300,P,170,300){}
      \drawbpedge[syo=105,eyo=30](V1,-5,300,P,170,300){}
      \drawbpedge[syo=120,eyo=45](V1,-5,300,P,170,300){}

      \drawbpedge[syo=15,eyo=-60](V2,185,300,P,10,300){}
      \drawbpedge[syo=30,eyo=-45](V2,185,300,P,10,300){}
      \drawbpedge[syo=45,eyo=-30](V2,185,300,P,10,300){}
      \drawbpedge[syo=60,eyo=-15](V2,185,300,P,10,300){}
      \drawbpedge[syo=75,eyo=0](V2,185,300,P,10,300){}
      \drawbpedge[syo=90,eyo=15](V2,185,300,P,10,300){}
      \drawbpedge[syo=105,eyo=30](V2,185,300,P,10,300){}
      \drawbpedge[syo=120,eyo=45](V2,185,300,P,10,300){}

      \put(1075,310){$\leftarrow$%
        \begin{tabular}{c}
          output\\qubit
        \end{tabular}
      }

    \end{picture}
  \end{center}
  \caption{An interaction between a verifier $V$ and a prover $P$ on
    an input $x$.
    The verifier's actions are determined by the circuits $U_x$ and
    $V_x$ acting on $2p(n)$ qubits, while the prover's action
    corresponds to the quantum operation $\Psi_x$ on just the
    first $p(n)$ qubits.}
  \label{fig:qip2}
\end{figure}

Now, a promise problem $A = (A_{\yes},A_{\no})$ is in $\class{QIP}(2)$
if and only if there exists a two-message verifier $V$ with the
following completeness and soundness properties:
\begin{mylist}{\parindent}
\item[1.] (Completeness)
If $x\in A_{\yes}$, then there exists a prover $P$ that causes $V$ to
accept $x$ with probability at least 2/3.

\item[2.] (Soundness)
If $x\in A_{\no}$, then every prover $P$ that is compatible with $V$
causes $V$ to accept $x$ with probability at most 1/3.
\end{mylist}

\subsection{Robustness of QIP(2) with respect to error bounds}

It was proved in \cite{KitaevW00} that quantum interactive proof
systems with negligible completeness error are amenable to parallel
repetition.
This allows for an exponential reduction in error for quantum
interactive proof systems with three or more messages, because such
proof systems can be transformed to have perfect completeness by a
different method.
However, this method does not work for two-message quantum interactive
proof systems, because the perfect-completeness transformation
requires the addition of messages.
So, we will require a different method of error reduction.

Assume that $A$ is a promise problem in $\class{QIP}(2)$ and that
$(V,P)$ is a two-message quantum interactive proof system for $A$ with
completeness and soundness probabilities bounded by $a$ and $b$, where
$a - b \geq 1/q$ for some polynomial-bounded function $q$.
We will define a new verifier $V'$ that has completeness probability at least
$1 - 2^{-r}$ and soundness probability at most $2^{-r}$, for any
choice of a polynomial-bounded function $r$.
A description of $V'$ follows.

\begin{mylist}{5mm}
\item[1.]
Let $s = 2 r q$ and let $t = 8 r q^2 s$.
Run $s t$ independent, parallel executions of the protocol for $V$,
one for each pair $(i,j)$ with $i\in\{1,\ldots,s\}$ and $j\in\{1,\ldots,t\}$.
Measure the output qubit for each execution, and let the result of the
measurement for execution $(i,j)$ be $y_{i,j}\in\{0,1\}$.

\item[2.]
For each $i = 1,\ldots,s$, set
\[
z_i = \left\{\begin{array}{ll}
1 & \text{if } \sum_{j=1}^t y_{i,j} \geq t\cdot \frac{a+b}{2}\\[2mm]
0 & \text{otherwise}
\end{array}\right.
\]

\item[3.]
Accept if $\bigwedge_{i=1}^s z_i = 1$, reject otherwise.
\end{mylist}

Now let us consider the maximum probability with which $V'$ can be made to
accept.
Suppose first that an input $x\in A_{\mathrm{yes}}$ is fixed, so that $V$ is
made to accept with probability at least $a$ by the prover $P$.
Our goal is to define a prover $P'$ that causes $V'$ to accept with
probability at least $1 - 2^{-r}$.
This is easily done by defining $P'$ so that it runs $s t$ independent
simulations of $P$.
Let $Y_{i,j}$ and $Z_i$, for $i\in\{1,\ldots,s\}$ and
$j\in\{1,\ldots,t\}$, be Boolean-valued random variables corresponding to the
values $y_{i,j}$ and $z_i$ when $V'$ interacts with the prover $P'$ just 
described.
Given that $P'$ simulates $s t$ independent copies of $P$, we have that the
random variables $Y_{i,j}$ are independent and satisfy $\op{E}[Y_{i,j}]\geq a$
for each pair $(i,j)$.
By the Chernoff Bound, we therefore have
\[
\op{Pr}[Z_i = 0] = \op{Pr}\left[Y_{i,1} + \cdots + Y_{i,t} < a t
\(1 - \frac{a-b}{2a}\)\right]
\leq \exp\(-\frac{t}{8 a}(a-b)^2\)
\leq e^{-rs},
\]
and thus the probability of rejection is at most
$s e^{-r s} <  2^{-r}$.

Suppose on the other hand that an input $x\in A_{\mathrm{no}}$ is fixed, so
that no prover $P$ can convince $V$ to accept with probability greater than
$b$.
Fix an arbitrary prover $P'$, and as before let $Y_{i,j}$ and $Z_i$ be
Boolean-valued random variables corresponding to $y_{i,j}$ and $z_i$.
There may not be independence among these random variables, as $P'$ may not
treat the parallel executions independently.
We do know, however, that $\op{E}[Y_{i,j}] \leq b$ for every $i,j$,
given that the maximum acceptance probability of $V$ is $b$.
By Markov's Inequality we therefore have
\[
\op{Pr}[Z_i = 1] = \op{Pr}\left[Y_{i,1} + \cdots + Y_{i,t} \geq
  \frac{t(a+b)}{2}\right]
\leq 2 \frac{\op{E}[Y_{i,1} + \cdots + Y_{i,t}]}{t(a+b)}
< 1 - \frac{1}{2 q}.
\]
We may view $z_1,\ldots,z_s$ as being the outcomes of $s$ parallel
executions of a quantum interactive proof system that accepts with probability
at most $1 - \frac{1}{2q}$.
The verifier $V'$ accepts if and only if all of these executions accept,
and so by the result on parallel repetition proved in \cite{KitaevW00}
we may conclude that the probability of acceptance of $V'$ is at most
\[
\(1 - \frac{1}{2q}\)^s < \exp\(-\frac{s}{2q}\)
< 2^{-r}.
\]

Thus, the verifier $V'$ has been shown to have completeness and soundness
probabilities as required, completing the proof.

\section{Maximum acceptance probability as a semidefinite program}
\label{sec:qip2-SDP}

The maximum acceptance probability of a verifier in a quantum
interactive proof system can be phrased as semidefinite programming
problem \cite{KitaevW00}.
For this paper a semidefinite programming formulation based on ones
described in \cite{GutoskiW07,Watrous09-cb-sdp} will be used.

Suppose $V$ is a two-message verifier, and that an input string $x$ of
length $n$ is being considered.
Let us also define
\[
\ket{\psi} = U_x \ket{0^{2p(n)}}
\quad\quad\text{and}\quad\quad
\Pi = V_x^{\ast} (\ket{1}\!\bra{1} \otimes \I) V_x.
\]
In words, $\ket{\psi}$ denotes the quantum state initially prepared by
the verifier, the first half of which is sent to the prover;
and $\Pi$ denotes the projection operator corresponding to the 
{\it accept} outcome of the measurement that the verifier effectively
performs after receiving the prover's message.

For convenience, we will assign distinct names to the complex vector
spaces that arise from an interaction between $V$ on input $x$ and a
compatible prover operation $\Psi$.
Specifically, let $\X$ denote the space corresponding to the
verifier's message to the prover, let $\Y$ denote the space
corresponding to the prover's response, and let $\Z$ denote the space
corresponding to the verifier's private qubits.
Thus, it holds that $\ket{\psi} \in \X\otimes\Z$ and $\Pi$ is a
projection on $\Y\otimes\Z$.
When the prover applies the operation
$\Psi:\lin{\X}\rightarrow\lin{\Y}$, the verifier accepts with
probability
\begin{equation}\label{eq:prob-accept}
\ip{\Pi}{(\Psi\otimes\I_{\lin{\Z}})(\ket{\psi}\!\bra{\psi})}.
\end{equation}

To express the maximum probability for $V$ to accept, over all choices
of an operation $\Psi$, as a semidefinite program, it is helpful to
recall the {\it Choi-Jamio{\l}kowski representation} of
super-operators.
Let us take $\{\ket{0},\ldots,\ket{N-1}\}$ to be
the standard basis of $\X$.
Then the Choi-Jamio{\l}kowski representation of
$\Psi:\lin{\X}\rightarrow\lin{\Y}$ is the operator
$J(\Psi)\in\lin{\Y\otimes\X}$ defined by
\[
J(\Psi) = \sum_{0 \leq i,j\leq N-1}
\Psi(\ket{i}\!\bra{j}) \otimes \ket{i}\!\bra{j}.
\]
It holds that $\Psi$ is completely positive if and only if $J(\Psi)$
is positive semidefinite \cite{Jamiolkowski72,Choi75}, and $\Psi$ is
trace-preserving if and only if $\tr_{\Y}(J(\Psi)) = \I_{\X}$.

Now let us write
\[
\ket{\psi} = \sum_{0\leq i \leq N-1} \ket{i} \ket{\psi_i}
\]
for vectors $\ket{\psi_0},\ldots,\ket{\psi_{N-1}}\in\Z$, and define
$B\in\lin{\X,\Z}$ as
\[
B = \sum_{0\leq i \leq N-1} \ket{\psi_i}\bra{i}.
\]
Then it is clear that
\[
(\Psi\otimes\I_{\lin{\Z}})(\ket{\psi}\!\bra{\psi})
= (\I_{\Y} \otimes B) J(\Psi) (\I_{\Y} \otimes B^{\ast}).
\]
We therefore find that the probability of acceptance
\eqref{eq:prob-accept} may alternately be written
\[
\ip{\Pi}{(\I_{\Y} \otimes B)J(\Psi)(\I_{\Y} \otimes B^{\ast})}
= \ip{(\I_{\Y} \otimes B^{\ast})\Pi(\I_{\Y}\otimes B)}{J(\Psi)}
= \ip{Q}{J(\Psi)}
\]
for $Q = (\I_{\Y} \otimes B^{\ast})\Pi(\I_{\Y}\otimes B)$.
We call $Q$ the {\it interactive measurement operator} that is
determined by $V$ on input $x$.
It is clear that the interactive measurement operator
$Q$ is positive semidefinite, and moreover that
$Q\leq \I_{\Y} \otimes \xi$ for the density operator 
$\xi = B^{\ast}B$.

Now, let us define
\[
\mu(Q) = \max_{\Psi} \ip{Q}{J(\Psi)},
\]
where the maximum is over all valid quantum operations of the form
$\Psi:\lin{\X}\rightarrow\lin{\Y}$.
The quantity $\mu(Q)$ will be called the 
{\it maximum acceptance probability} of $Q$, as this value is
precisely the maximum acceptance probability of the verifier $V$ on
input $x$, whose description alone has led us to the definition of
$Q$.
As stated above, when $\Psi:\lin{\X}\rightarrow\lin{\Y}$ ranges over
the set of all valid quantum operations, $J(\Psi)$ ranges over the set
of positive semidefinite operators satisfying the linear constraint
$\tr_{\Y}(J(\Psi))=\I_{\X}$.
This implies that the quantity $\mu(Q)$ is represented by a
semidefinite program:
\begin{align*}
  \text{maximize:}\quad & \ip{Q}{X}\\
  \text{subject to:}\quad & 
  \tr_{\Y}(X) \leq \I_{\X},\\
  & X\in\pos{\Y\otimes\X}.
\end{align*}
The feasible region of this semidefinite program is well-bounded
(in the sense of \cite{GrotschelLS93}), and therefore its optimal
value $\mu(Q)$ can be approximated to high precision in time
polynomial in the size of $Q$ (which is exponential in $\abs{x}$).
This fact does not help us to prove
$\class{QIP}(2)\subseteq\class{PSPACE}$, however.
As is described in the next section, we will need an $\class{NC}$
algorithm rather than just a polynomial-time algorithm to draw this
conclusion.

It will be necessary for us to rephrase the semidefinite program
above, and to explicitly state its dual program.
As will be discussed shortly, we will only need to consider this
formulation for invertible interactive measurement operators, so $Q$
is hereafter assumed to be invertible.
Define a super-operator $\Phi:\lin{\Y\otimes\X}\rightarrow\lin{\X}$ as
\[
\Phi(X) = \tr_{\Y}\(Q^{-1/2} X Q^{-1/2}\).
\]
The adjoint super-operator
$\Phi^{\ast}:\lin{\X}\rightarrow\lin{\Y\otimes\X}$ to $\Phi$ is given
by
\[
\Phi^{\ast}(Y) = Q^{-1/2} ( \I_{\Y} \otimes Y ) Q^{-1/2}.
\]
The value $\mu(Q)$ is then seen to be the optimal value of the
following semidefinite program:
\begin{center}
  \begin{minipage}{3in}
    \centerline{\underline{Primal problem}}\vspace{-7mm}
    \begin{align*}
    \text{maximize:}\quad & \tr(X)\\
    \text{subject to:}\quad & 
    \Phi(X) \leq \I_{\X},\\
    & X\in\pos{\Y\otimes\X}.
    \end{align*}
  \end{minipage}
  \begin{minipage}{3in}
    \centerline{\underline{Dual problem}}\vspace{-7mm}
    \begin{align*}
    \text{minimize:}\quad & \tr(Y) \\
    \text{subject to:}\quad 
    & \Phi^{\ast}(Y) \geq \I_{\Y\otimes\X}, \\
    & Y \in \pos{\X}.
    \end{align*}
  \end{minipage}
\end{center}
Strong duality follows from strict feasibility, which is easily
verified, and so the optimal primal and dual solutions are given by
$\mu(Q)$.

\section{Overview of the simulation} \label{sec:overview}

We will now explain, in high-level terms, our simulation of
$\class{QIP}(2)$ in $\class{PSPACE}$.
To prove that $\class{QIP}(2)\subseteq\class{PSPACE}$, it will
suffice to prove $\class{QIP}(2) \subseteq \class{NC}(\mathit{poly})$.
This will be facilitated by the fact, discussed in
Section~\ref{sec:NC}, that many computations involving matrices,
including elementary operations such as addition, multiplication, and
inversion, as well as approximations of spectral decompositions,
singular-value decompositions, and matrix exponentials, can be
performed in~$\class{NC}$.

For the remainder of this paper, assume that $A = (A_{\yes},A_{\no})$
is an arbitrary promise problem in $\class{QIP}(2)$, and let $V$ be a
two-message verifier for $A$ that has exponentially small completeness
and soundness error.
The goal of the simulation is to determine whether or not $V$ can be
made to accept a given input string $x$ with high probability.
The variable $n$ will always denote the input length $n = \abs{x}$,
and $p(n)$ will denote the number of qubits exchanged by the verifier
and prover on each of the two messages as discussed in
Section~\ref{sec:qip2}.

There are three main steps of the simulation:
\begin{mylist}{\parindent}
\item[1.]
Compute from $x$ an explicit description of
$\ket{\psi}$ and $\Pi$.

\item[2.]
Process the description of the vector $\ket{\psi}$ and the projection
$\Pi$ into a well-conditioned interactive measurement operator $Q$ and
positive rational numbers $\gamma$ and $\varepsilon$ satisfying
\begin{align*}
x\in A_{\yes} & \;\Rightarrow\;\mu(Q) \geq (1 + 4\varepsilon)\gamma,\\
x\in A_{\no} & \;\Rightarrow\;\mu(Q)\; \leq (1 - 4\varepsilon)\gamma.
\end{align*}
For some polynomial $q$ it will hold that $\kappa(Q) \leq q(n)$,
where $\kappa(Q) = \norm{Q}\snorm{Q^{-1}}$ denotes the
condition number of $Q$.
Moreover it will hold that $1/q(n) \leq \varepsilon$ and
$1/q(n) \leq \gamma$.

\item[3.]
Use a parallel algorithm, based on the multiplicative weights update
method, to test whether $\mu(Q)$ is larger or smaller than
$\gamma$.
\end{mylist}

\noindent
The first step is easily performed in $\class{NC}(\mathit{poly})$,
using an exact computation.
In particular, one may simply compute products of the matrices that
describe the individual gates of the verifier's circuits.
Given that this step is straightforward, we will not comment on it
further.
The second and third steps are more complicated, and are
described separately in Sections~\ref{sec:conditioning} and
\ref{sec:Arora-Kale} below.
Both correspond to $\class{NC}$ computations (where the input size
is exponential in $n$), and by composing these computations with the
first step just described, we will obtain that $A$ is in
$\class{NC}(\mathit{poly})$, and therefore
$\class{QIP}(2)\subseteq\class{PSPACE}$.

\section{Preparing a well-conditioned interactive measurement
  operator} \label{sec:conditioning}

After the first step of the simulation, we have a unit vector
$\ket{\psi}$ and a projection operator $\Pi$.
Let us write $M = 2^{p(n)}$ to denote the dimension of both of the
message spaces and the verifier's private work space defined by $V$ on
input $x$, and let us also define $\X_0=\complex^M$, $\Y=\complex^M$,
and $\Z_0=\complex^M$.
We view that the space $\X_0$ corresponds to the verifier's message to
the prover, that $\Y$ corresponds to the prover's message back to the
verifier, and that $\Z_0$ represents the verifier's private workspace;
and thus $\ket{\psi}\in \X_0\otimes\Z_0$ and
$\Pi\in\pos{\Y\otimes\Z_0}$.
The reason for subscripting $\X_0$ and $\Z_0$ with 0 is that we are
viewing these as initial choices of spaces.
In the processing of $\ket{\psi}$ and $\Pi$, we will define an
interactive measurement operator $Q$ over spaces $\X$, $\Y$ and $\Z$
where $\X = \complex^N$ and $\Z = \complex^N$ for some choice of a
positive integer $N\leq M$.

Along the same lines as was discussed in Section~\ref{sec:qip2}, we
may define an interactive measurement operator
$R\in\pos{\Y\otimes\X_0}$ as
$R = (\I_{\Y} \otimes B^{\ast})\Pi (\I_{\Y}\otimes B)$,
for
\[
\ket{\psi} = \sum_{0\leq i \leq M-1} \ket{i} \ket{\psi_i}
\quad\quad\text{and}\quad\quad
B = \sum_{0\leq i \leq M-1} \ket{\psi_i} \bra{i}.
\]
The quantity $\mu(R)$ is precisely the maximum acceptance probability
of $V$ on input $x$, but nothing can be said about the condition
number of $R$ (which may not even be invertible).

Our goal is to compute a new measurement operator
$Q\in\pos{\Y\otimes\X}$, where $\X = \complex^N$ for some choice of
$N\leq M$, along with positive rational numbers $\gamma$ and
$\varepsilon$, such that the following properties hold for some
polynomial $q$:
\begin{mylist}{\parindent}
\item[1.] The interactive measurement operator $Q$ is
  well-conditioned: $\kappa(Q) \leq q(n)$.
\item[2.] The values $\gamma$ and $\varepsilon$ are non-negligible:
$1/q(n) \leq \varepsilon$ and $1/q(n) \leq \gamma$.
\item[3.] The value $\mu(Q)$ satisfies the properties
  \begin{equation} \label{eq:accept-reject}
    \begin{split}
      x\in A_{\yes} & \;\Rightarrow\;\mu(Q) \geq (1 + 4\varepsilon)\gamma,\\
      x\in A_{\no} & \;\Rightarrow\;\mu(Q)\; \leq (1 - 4\varepsilon)\gamma.
    \end{split}
  \end{equation}
\end{mylist}

The first step in this process is to replace $\ket{\psi}$ by a more
uniform vector $\ket{\phi}\in\X_0\otimes\Z_0$ that is 
``similar enough'' to $\ket{\psi}$ in the sense to be described.
We will, in particular, take $\ket{\phi}$ to be maximally entangled
over certain subspaces of $\X_0$ and $\Z_0$.
This is done by performing the following operations:
\begin{mylist}{\parindent}
\item[1.]
Let
\[
\ket{\psi} = \sum_{0\leq j \leq M-1} \sqrt{\lambda_j} \ket{x_j}\ket{z_j}
\]
be a Schmidt decomposition of $\ket{\psi}$.

\item[2.]
For each positive integer $i$, define the interval
$I_i = \( 2^{-i}, 2^{-i+1}\right]$,
and define 
\[
\Sigma_i=\left\{j\in\{0,\ldots,M-1\}\,:\,\lambda_j\in I_i\right\}.
\]

\item[3.]
Let $k = p(n) + 1$ and choose $i\in\{1,\ldots,k\}$ so that
\[
\sum_{j \in \Sigma_i} \lambda_j \geq \frac{1}{2k}.
\]
The fact that such an $i$ exists is proved below, and hereafter we
write $\Sigma = \Sigma_i$ for this choice of~$i$.

\item[4.]
Define
\[
\ket{\phi} = \frac{1}{\sqrt{\abs{\Sigma}}} \sum_{j\in \Sigma}
\ket{x_j}\ket{z_j}.
\]
\end{mylist}

Now, consider the interactive measurement operator
$S\in\pos{\Y\otimes\X_0}$ that is obtained by replacing $\ket{\psi}$
with $\ket{\phi}$ (with $\Pi$ unchanged).
In other words, $S$ is defined by the same process as $R$ (which was
determined by $\ket{\psi}$ and $\Pi$ as described above), and
satisfies the equation
\[
\ip{S}{J(\Psi)} =
\ip{\Pi}{(\Psi\otimes\I_{\lin{\Z_0}})(\ket{\phi}\!\bra{\phi})}
\]
for every super-operator $\Psi:\lin{\X_0}\rightarrow\lin{\Y}$.
We will prove that
\begin{equation} \label{eq:bounds-on-S}
\mu(R) - \( 1 - \frac{1}{8k}\)
\leq \mu(S) \leq 4k\,\mu(R).
\end{equation}
These are fairly loose bounds---but for the two extremes where
$\mu(R)$ is exponentially close to 0 or~1, the corresponding values for
$\mu(S)$ will be separated by the reciprocal of a polynomial, which is
good enough for our needs.

First, note that
\[
\sum_{i>k} \sum_{j\in \Sigma_i} \lambda_j \leq M 2^{-k} \leq \frac{1}{2}
\]
and therefore
\[
\sum_{i = 1}^k \sum_{j\in \Sigma_i} \lambda_j \geq \frac{1}{2}.
\]
Thus, there must exist a suitable choice of $i$ in step 3 so that
\[
\sum_{j\in \Sigma_i} \lambda_j \geq \frac{1}{2k},
\]
as claimed.
A lower bound on the size of $\Sigma = \Sigma_i$ may be obtained by
noting that
\[
\sum_{j\in \Sigma}\lambda_j \leq 2^{-i+1}\abs{\Sigma}
\]
and therefore
\[
\abs{\Sigma} \geq \frac{2^i}{4k}.
\]

Now, the inner product of $\ket{\psi}$ and $\ket{\phi}$ is easily
bounded from below as
\[
\braket{\phi | \psi} = 
\sum_{j\in \Sigma} \sqrt{\frac{\lambda_j}{\abs{\Sigma}}}
\geq \abs{\Sigma} \sqrt{\frac{2^{-i}}{\abs{\Sigma}}}
= \sqrt{\abs{\Sigma} 2^{-i}}
\geq \frac{1}{\sqrt{4k}},
\]
and therefore
\[
\norm{
\ket{\psi}\!\bra{\psi} - \ket{\phi}\!\bra{\phi}}_1 = 
2 \sqrt{1 - \abs{\braket{\phi | \psi}}^2}
\leq 2 \sqrt{1 - \frac{1}{4k}} \leq 2 \( 1 - \frac{1}{8k} \).
\]
For every choice of an admissible super-operator
$\Psi:\lin{\X_0}\rightarrow\lin{\Y}$ it holds that
\[
0\leq\(\Psi^{\ast}\otimes\I_{\lin{\Z_0}}\)(\Pi)\leq \I_{\X_0\otimes\Z_0},
\]
and by combining this observation with the fact that
$\ket{\psi}\!\bra{\psi} - \ket{\phi}\!\bra{\phi}$ is traceless we
obtain
\[
\ip{R - S}{J(\Psi)}
= \ip{\(\Psi^{\ast}\otimes\I_{\lin{\Z_0}}\)(\Pi)}{
\ket{\psi}\!\bra{\psi} - \ket{\phi}\!\bra{\phi}}
\leq \frac{1}{2} \norm{ \ket{\psi}\!\bra{\psi} - 
\ket{\phi}\!\bra{\phi}}_1 \leq 1 - \frac{1}{8k}.
\]
Therefore
\[
\mu(R) - \mu(S) \leq 1 - \frac{1}{8k},
\]
which establishes the lower bound on $\mu(S)$ claimed in
\eqref{eq:bounds-on-S} above.

To establish the upper bound on $\mu(S)$, which is the second
inequality in \eqref{eq:bounds-on-S}, let us first choose an
admissible super-operator $\Psi$ so that
\[
\mu(S) = \ip{\Pi}{\(\Psi\otimes\I_{\lin{\Z_0}}\)(\ket{\phi}\!\bra{\phi})}.
\]
Now, observe that
\[
\frac{1}{4k} \frac{1}{\abs{\Sigma}} \leq 2^{-i} < \lambda_j
\]
for each $j\in \Sigma$, and thus
\[
\frac{1}{4k} \tr_{\X_0}(\ket{\phi}\!\bra{\phi}) \leq
\tr_{\X_0}(\ket{\psi}\!\bra{\psi}).
\]
It is therefore possible to choose a density operator
$\xi\in\density{\Y\otimes\Z_0}$ such that
\[
\( 1 - \frac{1}{4k} \) \tr_{\Y} (\xi)
=
\tr_{\X_0}( \ket{\psi}\!\bra{\psi}) - 
\frac{1}{4k} \tr_{\X_0}(\ket{\phi}\!\bra{\phi}).
\]
Because $\Psi$ is admissible, we may therefore conclude that
\[
\tr_{\X_0}(\ket{\psi}\!\bra{\psi}) =
\tr_{\Y} \(  
\frac{1}{4k} \( \Psi\otimes\I_{\lin{\Z_0}} \) 
\( \ket{\phi}\!\bra{\phi} \)
+ \( 1 - \frac{1}{4k} \) \xi \),
\]
and so there must exist an admissible super-operator
$\Xi:\lin{\X_0}\rightarrow\lin{\Y}$ so that
\[
\( \Xi \otimes \I_{\lin{\Z_0}} \) (\ket{\psi}\!\bra{\psi})
=
\frac{1}{4k} \( \Psi\otimes\I_{\lin{\Z_0}} \) \( \ket{\phi}\!\bra{\phi} \)
+ \( 1 - \frac{1}{4k} \) \xi.
\]
Consequently we have
\[
\mu(R) \geq \ip{R}{J(\Xi)}
= \ip{\Pi}{\( \Xi \otimes \I_{\lin{\Z_0}} \) (\ket{\psi}\!\bra{\psi})}
\geq \frac{1}{4k}\ip{\Pi}{\(
  \Psi\otimes\I_{\lin{\Z_0}}\)\(\ket{\phi}\!\bra{\phi}\)}
= \frac{1}{4k} \mu(S)
\]
as required.

Having obtained a uniform vector $\ket{\phi}$ that gives (when
combined with $\Pi$) an interactive measurement operator $S$ satisfying
\eqref{eq:bounds-on-S}, we must make a couple of additional
modifications to be sure that a well-conditioned interactive
measurement operator has been obtained.

First, we will replace $\X_0$ and $\Z_0$ with the spaces
$\X = \complex^N$ and $\Z = \complex^N$ for 
$N = \abs{\Sigma}$.
Let $X\in\unitary{\X,\X_0}$ and $Z\in\unitary{\Z,\Z_0}$ be linear
isometries defined as
\[
X = \sum_{j = 1}^N \ket{x_j}\!\bra{j}
\quad\quad\text{and}\quad\quad
Z = \sum_{j = 1}^N \ket{z_j}\!\bra{j},
\]
and define
\[
\ket{\tau} = \( X^{\ast} \otimes Z^{\ast} \)\ket{\phi}
= \frac{1}{\sqrt{N}}\sum_{j = 1}^N\ket{j}\ket{j}
\quad\quad\text{and}\quad\quad
P = \(\I_{\Y}\otimes Z^{\ast}\) \Pi \(\I_{\Y}\otimes Z\).
\]
It is clear that $\ket{\tau}$ is a unit vector and $P$ is an ordinary
measurement operator.
(It might not be that $P$ is a projection operator, but it is positive
semidefinite and satisfies $P\leq\I_{\Y\otimes\Z}$.)
Finally, let $Q\in\pos{\Y\otimes\X}$ be the interactive measurement
operator defined by the vector $\ket{\tau}$ and the ordinary
measurement operator
\[
\(1 - \frac{1}{64 k}\) P + \frac{1}{64 k} \I_{\Y\otimes\Z}.
\]
It holds that
\[
\mu(Q) = \( 1 - \frac{1}{64k} \) \mu(S) + \frac{1}{64 k},
\]
and therefore
\begin{equation} \label{eq:Q-and-R-bounds}
\mu(R) - \( 1 - \frac{1}{8k}\) \leq \mu(Q) \leq 4k \mu(R) +
\frac{1}{64k}.
\end{equation}

Now let us verify that $Q$ has the properties we require of it.
First let us consider the condition number $\kappa(Q)$.
It is easily shown that
\[
Q \geq \frac{1}{64kN} \I_{\Y\otimes\X}
\quad\quad\text{and}\quad\quad
Q \leq \frac{1}{N} \I_{\Y\otimes\X},
\]
and therefore $\kappa(Q) \leq 64k$.
It remains to define nonnegligible values $\gamma$ and
$\varepsilon$, and to consider their relationship to $\mu(Q)$ in the
two cases: $x\in A_{\yes}$ and $x\in A_{\no}$.

We have assumed that the original interactive proof system has
exponentially small completeness and soundness errors, and therefore
we may assume that for sufficiently large $n$ we have
\[
x \in A_{\no} \Rightarrow \mu(R) \leq \frac{1}{256\,k^2}
\quad\quad\text{and}\quad\quad
x \in A_{\yes} \Rightarrow \mu(R) \geq  1 - \frac{1}{32 k}.
\]
(Alternately we may assume these bounds hold for all $n$ by
hard-coding small inputs $x$ into the verifier.)
Thus, by the bounds \eqref{eq:Q-and-R-bounds} above, we have
\[
x \in A_{\no} \Rightarrow \mu(Q) \leq \frac{1}{32 k}
\quad\quad\text{and}\quad\quad
x \in A_{\yes} \Rightarrow \mu(Q) \geq \frac{1}{16 k}.
\]
By taking
\[
\gamma = \frac{3}{64k} = \frac{3}{64(p(n) + 1)},
\quad \varepsilon = \frac{1}{12},\quad\text{and}\quad
q(n) \geq 64k = 64(p(n) + 1),
\]
we therefore have the properties required.

\section{Verifying maximum acceptance probability}
\label{sec:Arora-Kale}

We now describe and analyze a parallel algorithm, based on the
multiplicative weights update method, to distinguish the two cases
\eqref{eq:accept-reject} from the previous section.
The algorithm operates as described in Figure~\ref{fig:algorithm}, and
the super-operators $\Phi$ and $\Phi^{\ast}$ are as defined in
Section~\ref{sec:qip2-SDP}.
\begin{figure}[t]
\noindent\hrulefill
\begin{mylist}{8mm}
\item[1.]
Let
\[
\delta = \frac{\varepsilon^2}{8\,\kappa(Q)^2}
\quad\quad\text{and}\quad\quad
T = \ceil{\frac{24\,\ln(NM)}{\varepsilon^3\gamma^3\delta}}.
\]

\item[2.]
Let $W_0 = \I_{\Y\otimes\X}$ and let $\rho_0 = W_0/\tr(W_0)$.

\item[3.]
For $t = 0, \ldots, T-1$ do:\vspace{-1mm}
\begin{mylist}{8mm}
\item[a.]
Compute a spectral decomposition of $\Phi(\rho_t)$:
\[
\Phi(\rho_t) = \sum_{j = 1}^N \lambda_j \ket{x_j}\!\bra{x_j}.
\]

\item[b.]
Let $S = \{j\in\{1,\ldots,N\}\,:\,\gamma \lambda_j > 1\}$ and
let $s = \sum_{j\in S}\lambda_j$.

\item[c.]
If $s \leq \delta \norm{Q^{-1}}$, then halt and {\it accept}.

\item[d.]
Let
\[
Y_t = \frac{1}{s} \sum_{j\in S} \ket{x_j}\!\bra{x_j}.
\]

\item[e.]
Let
\[
W_{t+1} = \exp\(-\frac{\varepsilon\gamma\delta}{2}
\,\Phi^{\ast}(Y_0+\cdots+Y_t)\),
\]
and let $\rho_{t+1} = W_{t+1}/\tr(W_{t+1})$.
\end{mylist}

\item[4.]
Halt and {\it reject}.
\end{mylist}
\noindent\hrulefill
\caption{Algorithm to test if $\mu(Q)\geq \gamma$.}
\label{fig:algorithm}
\end{figure}

\subsection{Lemmas used in the analysis}

We will need a few basic facts and three lemmas to analyze the
algorithm described in Figure~\ref{fig:algorithm}.
We begin by noting two facts concerning matrix exponentials.
First, the {\it Golden-Thompson Inequality} 
(see Section IX.3 of \cite{Bhatia97}) states that, for any two
Hermitian matrices $X$ and $Y$ of equal dimension, we have
\[
\tr\left(e^{X + Y}\right) \leq \tr \left(e^X e^Y\right).
\]
Second is the following inequality concerning the
matrix exponential of positive semidefinite matrices.

\begin{lemma} \label{lemma:exp-inequality}
Let $P$ be an operator satisfying $0\leq P\leq \I$.  
Then for every real number $\eta > 0$, it holds that
\[
\exp(-\eta P) \leq \I - \eta \exp(-\eta)P.
\]
\end{lemma}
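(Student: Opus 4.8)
The plan is to reduce the operator inequality to a scalar inequality on the spectrum of $P$ via the spectral theorem, and then to prove that scalar inequality by two applications of the convexity of the exponential function.

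First I would note that both $\exp(-\eta P)$ and $\I - \eta\exp(-\eta)P$ are functions of the single Hermitian operator $P$, so they commute and are simultaneously diagonalized in an eigenbasis of $P$. Since $0\leq P\leq\I$, every eigenvalue $x$ of $P$ lies in $[0,1]$. Hence the operator inequality $\exp(-\eta P)\leq\I-\eta\exp(-\eta)P$ is equivalent to the scalar inequality
\[
e^{-\eta x}\leq 1-\eta e^{-\eta}x\qquad\text{for all }x\in[0,1].
\]
To prove this, I would use convexity of $t\mapsto e^t$ twice. Writing $-\eta x=(1-x)\cdot 0+x\cdot(-\eta)$ as a convex combination for $x\in[0,1]$, convexity gives
\[
e^{-\eta x}\leq(1-x)e^{0}+x e^{-\eta}=1-(1-e^{-\eta})x.
\]
Next, the elementary bound $e^{\eta}\geq 1+\eta$ (valid for $\eta\geq 0$) rearranges to $1-e^{-\eta}\geq\eta e^{-\eta}$, so that, since $x\geq 0$,
\[
1-(1-e^{-\eta})x\leq 1-\eta e^{-\eta}x.
\]
Chaining these two displays yields the scalar inequality, and therefore the lemma.

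I do not expect a genuine obstacle here; the only point needing a line of care is the equivalence of the operator inequality with the scalar one, which holds precisely because the two operators are functions of $P$ and hence commute. As an alternative to the second convexity step, one could instead check directly that $f(x)=1-\eta e^{-\eta}x-e^{-\eta x}$ satisfies $f(0)=0$, $f(1)=1-(1+\eta)e^{-\eta}\geq 0$, and $f''(x)=-\eta^{2}e^{-\eta x}<0$, so concavity of $f$ on $[0,1]$ forces $f\geq 0$ there; but the two-step convexity argument is cleaner and avoids any calculus.
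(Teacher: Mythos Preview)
Your proof is correct. Both you and the paper first reduce to the scalar inequality $e^{-\eta x}\leq 1-\eta e^{-\eta}x$ for $x\in[0,1]$ via the spectral theorem, and this reduction is identical. For the scalar step, however, the paper invokes the Mean Value Theorem: for $\lambda\in(0,1]$ there is $\lambda_0\in(0,\lambda)$ with
\[
\frac{e^{-\eta\lambda}-1}{\lambda}=-\eta e^{-\eta\lambda_0}\leq-\eta e^{-\eta},
\]
which rearranges immediately. Your two-step convexity argument (chord inequality on $[-\eta,0]$, then $e^{\eta}\geq 1+\eta$) is a genuinely different route to the same scalar bound. Both are one-line arguments of comparable length; the MVT version is slightly more direct in that it yields the final form in a single stroke, while your version has the minor expository advantage of making explicit that the inequality is a consequence of convexity alone, with no differentiation needed beyond the tangent-line bound $e^{\eta}\geq 1+\eta$. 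Your alternative concavity check of $f(x)=1-\eta e^{-\eta}x-e^{-\eta x}$ is also fine and is essentially the paper's MVT argument in integrated form.
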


\begin{proof}
It is sufficient to prove the inequality for $P$ replaced by a
scalar $\lambda\in[0,1]$, for then the operator inequality follows by
considering a spectral decomposition of $P$.
If $\lambda=0$ then the inequality is trivial, so assume $\lambda>0$.
By the Mean Value Theorem there exists a value
$\lambda_0\in(0,\lambda)$ such that
\[
\frac{\exp(-\eta\lambda)-1}{\lambda} = -\eta\exp(-\eta\lambda_0)
\leq -\eta \exp(-\eta),
\]
which yields the inequality.
\end{proof}

For the proofs of the remaining two lemmas, some additional notation
is used.
Suppose that $\X = \complex^N$ and $\Y = \complex^M$, and assume that
the standard bases of these spaces are 
$\{\ket{0},\ldots,\ket{N-1}\}$ and $\{\ket{0},\ldots,\ket{M-1}\}$,
respectively.
Then we define a linear mapping
\[
\op{vec}:\lin{\X,\Y} \rightarrow \Y\otimes\X
\]
by taking $\op{vec}(\ket{i}\bra{j}) = \ket{i}\ket{j}$ for each choice
of $i\in\{0,\ldots,M-1\}$ and $j\in\{0,\ldots,N-1\}$.

\begin{lemma} \label{lemma:Uhlmann-extension}
Let $\X = \complex^N$ and $\Y = \complex^M$ for positive integers $N$
and $M$.
Let $P_0,P_1\in\pos{\X}$ and let $R_0\in\pos{\X\otimes\Y}$ satisfy
$\tr_{\Y}(R_0) = P_0$.
Then there exists an operator $R_1\in\pos{\X\otimes\Y}$ such that
$\tr_{\Y}(R_1) = P_1$ and $\fid(R_0,R_1) = \fid(P_0,P_1)$.
\end{lemma}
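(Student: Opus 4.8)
The statement is a standard extension of Uhlmann's theorem, and the plan is to reduce it to that theorem by working with purifications. First I would recall Uhlmann's theorem in the form we need: for positive semidefinite operators $P_0,P_1\in\pos{\X}$, the fidelity $\fid(P_0,P_1)$ equals the maximum of $\abs{\braket{u_0|u_1}}$ over all purifications $\ket{u_0},\ket{u_1}$ of $P_0,P_1$ in $\X\otimes\W$, where $\W$ is any space of dimension at least $N$; moreover, once one purification $\ket{u_0}$ of $P_0$ is fixed, the maximum over purifications $\ket{u_1}$ of $P_1$ alone already attains $\fid(P_0,P_1)$. The operator $R_0\in\pos{\X\otimes\Y}$ with $\tr_\Y(R_0) = P_0$ is, up to a further purification, essentially a purification of $P_0$; so the idea is to purify $R_0$ into $\X\otimes\Y\otimes\W$, apply Uhlmann to get a purification of $P_1$ achieving the optimal fidelity, and then trace down appropriately to obtain $R_1$.

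In more detail: let $\ket{u_0}\in\X\otimes\Y\otimes\W$ be a purification of $R_0$ (so $\tr_{\Y\otimes\W}(\ket{u_0}\!\bra{u_0}) = P_0$), where $\W$ has dimension at least $N$. By Uhlmann's theorem there is a purification $\ket{u_1}\in\X\otimes\Y\otimes\W$ of $P_1$ — meaning $\tr_{\Y\otimes\W}(\ket{u_1}\!\bra{u_1}) = P_1$ — with $\abs{\braket{u_0|u_1}} = \fid(P_0,P_1)$. Now define $R_1 = \tr_\W(\ket{u_1}\!\bra{u_1}) \in \pos{\X\otimes\Y}$. Then $\tr_\Y(R_1) = \tr_{\Y\otimes\W}(\ket{u_1}\!\bra{u_1}) = P_1$ as desired. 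It remains to check $\fid(R_0,R_1) = \fid(P_0,P_1)$. The inequality $\fid(R_0,R_1) \geq \fid(P_0,P_1)$ follows because $\ket{u_0}$ is a purification of $R_0$ and $\ket{u_1}$ is a purification of $R_1$ (both in $(\X\otimes\Y)\otimes\W$), so $\fid(R_0,R_1) \geq \abs{\braket{u_0|u_1}} = \fid(P_0,P_1)$. The reverse inequality is the monotonicity of fidelity under the partial-trace channel $\tr_\Y$: since $\tr_\Y(R_0) = P_0$ and $\tr_\Y(R_1) = P_1$, we get $\fid(R_0,R_1) \leq \fid(P_0,P_1)$. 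Combining the two gives equality.

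The main obstacle is getting the purification bookkeeping exactly right — in particular ensuring the ancilla space $\W$ is large enough for Uhlmann's theorem to apply in the space $\X\otimes\Y\otimes\W$ (one needs $\dim\W \geq \dim(\X\otimes\Y) = NM$, or one can first pass through the smaller canonical purification space and enlarge), and being careful that "purification of $P_1$ in $\X\otimes\Y\otimes\W$" is exactly the object Uhlmann's theorem produces when we designate $\Y\otimes\W$ as the purifying system. Once that is set up, both the lower bound (restricting the max in Uhlmann's theorem to the specific pair $\ket{u_0},\ket{u_1}$) and the upper bound (data processing) are immediate. One could alternatively phrase the whole argument using the $\op{vec}$ map and the identity $\fid(P,Q) = \norm{\sqrt{P}\sqrt{Q}}_1$ together with an explicit description of purifications as $(\sqrt{P}\otimes\I)\op{vec}(U)$; that route avoids invoking Uhlmann as a black box but amounts to reproving it, so I would keep the high-level purification argument and cite Uhlmann's theorem and the data-processing inequality for fidelity as known facts.
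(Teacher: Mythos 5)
Your proposal is correct and matches the paper's proof in structure: the upper bound $\fid(R_0,R_1)\leq\fid(P_0,P_1)$ comes from monotonicity of fidelity under $\tr_\Y$, and the lower bound comes from purifying $R_0$ in $\X\otimes\Y\otimes\W$, invoking Uhlmann's theorem to produce a purification of $P_1$ achieving the fidelity, and tracing out $\W$ to get $R_1$. The paper simply carries out the Uhlmann step explicitly via the $\op{vec}$ map and a polar-decomposition choice of $V$, exactly the alternative you mention at the end, so the two arguments are the same.
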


\begin{proof}
By the monotonicity of the fidelity function, it must hold that
$\fid(R_0,R_1)\leq\fid(P_0,P_1)$ for every choice of $R_1$ satisfying
$\tr_{\Y}(R_1) = P_1$.
It therefore suffices to show that equality can be achieved.

Choose a unitary operator $V\in\unitary{\X}$ for which
$\sqrt{P_0}\sqrt{P_1}V$ is positive semidefinite.
For such a $V$ it holds that 
$\fid(P_0,P_1) = \tr(\sqrt{P_0}\sqrt{P_1} V)$.
Now let $\W = \complex^{NM}$ and let
$\ket{u_0}\in\Y\otimes\X\otimes\W$ be a purification of $R_0$.
Given that $\ket{u_0}$ also purifies $P_0$, it must take the form
\[
\ket{u_0} = \op{vec}\( \sqrt{P_0} U^{\ast}\)
\]
for some choice of a linear isometry $U\in\unitary{\X,\Y\otimes\W}$.
Finally, let
\[
R_1 = \tr_{\W}\( \op{vec}\(\sqrt{P_1}VU^{\ast}\)
\op{vec}\(\sqrt{P_1} VU^{\ast}\)^{\ast}\).
\]
It holds that
\[
\fid(R_0,R_1) 
\geq
\abs{\ip{\sqrt{P_0}U^{\ast}}{\sqrt{P_1}VU^{\ast}}}
= \tr\(\sqrt{P_0} \sqrt{P_1}V\)
= \fid(P_0,P_1)
\]
as required.
\end{proof}

\begin{remark}
  Lemma~\ref{lemma:Uhlmann-extension} is a fairly straightforward
  extension of Uhlmann's Theorem \cite{Uhlmann76}.
  (See also pages 410--411 of \cite{NielsenC00}.)
\end{remark}

\begin{lemma} \label{lemma:FvdG-extension}
Let $R_0,R_1\in\pos{\W}$ for $\W = \complex^k$.
Then
\[
\norm{R_0 - R_1}_1 \leq 
\sqrt{2 \tr(R_0)^2 + 2 \tr(R_1)^2 - 4 \fid(R_0,R_1)^2}.
\]
\end{lemma}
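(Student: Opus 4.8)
The plan is to establish this as a generalization of the Fuchs--van de Graaf inequality, by the standard purification argument. First I would pass to pure states: by Uhlmann's theorem --- in the form valid for arbitrary positive semidefinite operators, which follows from the density operator case by a trivial rescaling and is closely related to Lemma~\ref{lemma:Uhlmann-extension} --- there exist a space $\W' = \complex^k$ and vectors $\ket{u_0},\ket{u_1}\in\W\otimes\W'$ such that $\tr_{\W'}(\ket{u_j}\!\bra{u_j}) = R_j$ for $j\in\{0,1\}$ and $\abs{\braket{u_0|u_1}} = \fid(R_0,R_1)$. Note that this forces $\norm{u_j}^2 = \tr(R_j)$ for each $j$.

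Next I would use monotonicity: since the partial trace $\tr_{\W'}$ is completely positive and trace preserving, it does not increase the trace norm, and hence
\[
\norm{R_0 - R_1}_1 \leq \norm{\,\ket{u_0}\!\bra{u_0} - \ket{u_1}\!\bra{u_1}\,}_1.
\]
The operator $D = \ket{u_0}\!\bra{u_0} - \ket{u_1}\!\bra{u_1}$ is Hermitian and has rank at most $2$, so applying the Cauchy--Schwarz inequality to its (at most two) nonzero eigenvalues gives $\norm{D}_1 \leq \sqrt{2}\,\norm{D}_2$. A direct expansion of $D^2$ yields
\[
\norm{D}_2^2 = \tr(D^2) = \norm{u_0}^4 + \norm{u_1}^4 - 2\abs{\braket{u_0|u_1}}^2 = \tr(R_0)^2 + \tr(R_1)^2 - 2\,\fid(R_0,R_1)^2 .
\]
Chaining the last three displays gives $\norm{R_0-R_1}_1^2 \leq 2\tr(R_0)^2 + 2\tr(R_1)^2 - 4\,\fid(R_0,R_1)^2$, which is exactly the claimed inequality.

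Everything after the first step is routine calculation, so I expect no real obstacle. The only point needing a word of care is the use of Uhlmann's theorem for general positive semidefinite (rather than normalized) operators, together with the observation that the partial trace is a trace-norm contraction; both are standard, the latter being of the same flavor as the monotonicity of fidelity invoked in the proof of Lemma~\ref{lemma:Uhlmann-extension}. If one preferred to avoid invoking contractivity of the trace norm as a black box, one could instead diagonalize $D$ on the span of $\ket{u_0}$ and $\ket{u_1}$ and bound $\norm{D}_1$ directly, but this only lengthens the argument.
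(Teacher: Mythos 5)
Your proof is correct and is essentially the same as the paper's: both pass to optimal purifications of $R_0$ and $R_1$ whose inner product realizes the fidelity, then use contractivity of the trace norm under partial trace, the rank-at-most-2 bound $\snorm{D}_1\leq\sqrt{2}\snorm{D}_2$, and a direct expansion of the Frobenius norm. The only cosmetic difference is that the paper constructs the purifications explicitly as $\op{vec}(\sqrt{R_0})$ and $\op{vec}(\sqrt{R_1}V)$ for a suitable unitary $V$, whereas you invoke the (rescaled) Uhlmann theorem as a black box.
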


\begin{proof}
Choose $V\in\unitary{\W}$ so that $\sqrt{R_0}\sqrt{R_1}V$ is positive
semidefinite, and therefore 
$\fid(R_0,R_1) = \tr(\sqrt{R_0}\sqrt{R_1}V)$.
We have that
\[
\op{vec}\(\sqrt{R_0}\) \in \W\otimes\W
\quad\quad\text{and}\quad\quad
\op{vec}\(\sqrt{R_1} V\) \in \W\otimes\W
\]
purify $R_0$ and $R_1$, respectively, so by the monotonicity of the
trace norm it holds that
\[
\norm{R_0 - R_1}_1
\leq
\norm{
\op{vec}\(\sqrt{R_0}\) \op{vec}\(\sqrt{R_0}\)^{\ast}
-
\op{vec}\(\sqrt{R_1} V\) \op{vec}\(\sqrt{R_1} V\)^{\ast}}_1.
\]
Using the inequality
\[
\norm{A}_1 \leq \sqrt{\op{rank}(A)} \norm{A}_2,
\]
which follows easily from the expressions of the trace and Frobenius
norms in terms of singular values, along with the Cauchy--Schwarz
inequality, we have
\begin{align*}
\norm{R_0 - R_1}_1 & \leq \sqrt{2}
\norm{
\op{vec}\(\sqrt{R_0}\) \op{vec}\(\sqrt{R_0}\)^{\ast}
-
\op{vec}\(\sqrt{R_1} V\) \op{vec}\(\sqrt{R_1} V\)^{\ast}}_2\\
& =
\sqrt{2 \tr(R_0)^2 + 2 \tr(R_1)^2 - 4
  \tr\(\sqrt{R_0}\sqrt{R_1}V\)^2}\\
& = \sqrt{2 \tr(R_0)^2 + 2 \tr(R_1)^2 - 4 \fid(R_0,R_1)^2}
\end{align*}
as required.
\end{proof}

\begin{remark}
When $R_0 = \rho_0$ and $R_1 = \rho_1$ for density operators $\rho_0$
and $\rho_1$, we obtain the familiar inequality
\[
\norm{\rho_0 - \rho_1}_1 \leq 2 \sqrt{1 - \fid(\rho_0,\rho_1)^2},
\]
or equivalently
\[
\fid(\rho_0,\rho_1) \leq \sqrt{1 - \frac{1}{4}\norm{\rho_0 -
    \rho_1}_1^2},
\]
which is one of the Fuchs-van de Graaf inequalities \cite{FuchsvdG99}.
\end{remark}

\subsection{Analysis of the algorithm (ignoring precision)}

Our algorithm cannot be implemented exactly using bounded-depth
Boolean circuits: the spectral decompositions and matrix
exponentials can only be approximated.
However, for the sake of exposition, the issue of precision will be
completely ignored in this subsection; meaning that we will
imagine that all of the operations can be performed exactly. 
In the section that follows this one, the actual precision
requirements of the algorithm are considered.
As is shown there, it turns out that the algorithm is not particularly
sensitive to errors, and in fact it is possible to perform all of the
required computations in parallel with exponentially greater precision
than would be required for the correctness of the algorithm.

Let us consider first the case that the algorithm accepts.
Let $\rho = \rho_t$ for the iteration $t$ of the loop in step 3 in
which acceptance occurs.
To prove that the algorithm has answered correctly, we will construct
an operator $X\in\pos{\Y\otimes\X}$ such that $\Phi(X) \leq \I_{\Y}$
and $\tr(X) \geq (1 - \varepsilon)\gamma$; and therefore
$\mu(Q) \geq (1 - \varepsilon)\gamma$.
By the conditions \eqref{eq:accept-reject} on $Q$, this implies that
$\mu(Q) \geq (1 + 4\varepsilon)\gamma$, and therefore $x\in A_{\yes}$,
as required.

The operator $X$ is defined as follows.
First, let $R_0 = Q^{-1/2}\rho Q^{-1/2}$, let
$P_0 = \tr_{\Y}(R_0) = \Phi(\rho)$, and let
\[
P_1 = \frac{1}{\gamma}\sum_{j\in S} \ket{x_j}\!\bra{x_j} + 
\sum_{j\not\in S}\lambda_j \ket{x_j}\!\bra{x_j}.
\]
By Lemma~\ref{lemma:Uhlmann-extension} there must exist
$R_1\in\pos{\Y\otimes\X}$ such that $\tr_{\Y}(R_1) = P_1$ and
$\fid(R_0,R_1) = \fid(P_0,P_1)$.
We then take
\[
X = \gamma \sqrt{Q} R_1 \sqrt{Q}.
\]
It holds that $X \geq 0$ and $\Phi(X) \leq \I_{\X}$.
To establish a lower bound on $\tr(X)$, we first note that
\[
1 - \tr\(\sqrt{Q} R_1 \sqrt{Q}\)
= \ip{Q}{R_0 - R_1}
\leq \norm{Q} \norm{R_0 - R_1}_1.
\]
By Lemma~\ref{lemma:FvdG-extension}, we conclude that
\[
\norm{R_0 - R_1}_1 
\leq \sqrt{ 2 \tr(P_0)^2 + 2 \tr(P_1)^2 - 4 \fid(P_0,P_1)^2}.
\]
It holds that $\tr(P_1) \leq \tr(P_0)$, and we also have
\[
\fid(P_0,P_1) 
\geq \sum_{j\not\in S}\lambda_j
= \tr(P_0) - s \geq \tr(P_0) - \delta \snorm{Q^{-1}}.
\]
Therefore, given that $\tr(P_0) \leq \snorm{Q^{-1}}$, we have
$\norm{R_0 - R_1}_1 \leq \sqrt{8 \delta} \snorm{Q^{-1}}$,
and so
\[
1 - \tr\(\sqrt{Q} R_1 \sqrt{Q}\) \leq 
\sqrt{8 \delta} \snorm{Q^{-1}} \norm{Q}
= \sqrt{8 \delta} \kappa(Q) = \varepsilon.
\]
It follows that
\[
\tr(X) = \gamma \tr\(\sqrt{Q} R_1 \sqrt{Q}\)
\geq (1 - \varepsilon)\gamma
\]
as required.

Now let us consider the case that the algorithm rejects.
Along similar lines to the previous case, we will construct an
operator $Y\in\pos{\X}$ such that 
$\Phi^{\ast}(Y) \geq \I_{\Y\otimes\X}$ and
$\tr(Y) \leq (1 + \varepsilon)\gamma$.
By the conditions \eqref{eq:accept-reject} on $Q$ this implies that
$\mu(Q) \leq (1 - 4\varepsilon)\gamma$, and therefore $x\in A_{\no}$.
In particular, we may take
\[
Y = \frac{1+\varepsilon}{T}(Y_0+\cdots+Y_{T-1}).
\]
Each operator $Y_t$ satisfies
\[
\tr(Y_t) = \frac{\abs{S}}{s} < \frac{1}{s}\sum_{j\in S} \gamma \lambda_j
= \gamma,
\]
and therefore $\tr(Y) < (1 + \varepsilon)\gamma$.
Each $Y_t$ is also clearly positive semidefinite, so it remains to
prove that $\Phi^{\ast}(Y) \geq \I_{\Y\otimes\X}$.
To this end we will first establish two conditions on each operator
$Y_t$.
First, we have
\begin{equation} \label{eq:Y-condition-1}
\ip{\rho_t}{\Phi^{\ast}(Y_t)} = \ip{\Phi(\rho_t)}{Y_t} 
= \frac{1}{s}\sum_{j\in S}\lambda_j = 1.
\end{equation}
Second, given that $s > \delta\snorm{Q^{-1}}$ for the case at hand,
we have
\[
\norm{\Phi^{\ast}(Y_t)} 
= \snorm{Q^{-1/2}(\I_{\Y}\otimes Y_t) Q^{-1/2}}
\leq \snorm{Q^{-1}} \norm{Y_t} 
= \frac{\snorm{Q^{-1}}}{s} 
< \frac{1}{\delta},
\]
and therefore $\norm{\delta \Phi^{\ast}(Y_t)} < 1$.

Now, for the sake of clarity, let us write $\eta = \varepsilon\gamma/2$.
Note that, for $0\leq t\leq T-1$, it holds that
\begin{align*}
\tr(W_{t+1})
& = \tr\left[\exp( - \eta \delta \Phi^{\ast}(Y_0+\cdots+Y_t))\right]\\
& \leq 
\tr \left[\exp( - \eta \delta \Phi^{\ast}(Y_0+\cdots+Y_{t-1}))
\exp(-\eta\delta\Phi^{\ast}(Y_t))\right]\\
& = \tr\left[W_t\exp(-\eta\delta\Phi^{\ast}(Y_t))\right],
\end{align*}
where we have used the Golden--Thompson Inequality.
Given that $\norm{\delta\Phi^{\ast}(Y_t)}\leq 1$ we have by
Lemma~\ref{lemma:exp-inequality} that
\[
\exp(-\eta\delta \Phi^{\ast}(Y_t)) \leq
\I - \eta\delta \exp(-\eta) \Phi^{\ast}(Y_t),
\]
and therefore
\[
\tr(W_{t+1}) \leq \tr(W_t) (1 - \eta\delta \exp(-\eta) 
\ip{\rho_t}{\Phi^{\ast}(Y_t)})
\leq \tr(W_t) \exp(-\eta\delta\exp(-\eta)).
\]
(Here we have used the inequality $\exp(-\alpha) \geq 1 - \alpha$,
which holds for all real numbers $\alpha$, as well as the fact that
$\tr(AB) \leq \tr(AC)$ whenever $A\geq 0$ and $B\leq C$).
Repeating this argument, and substituting $\tr(W_0) = NM$, we have
\[
\tr(W_T) \leq NM \exp(-\eta\delta T \exp(-\eta)).
\]
On the other hand, it is clear that
\[
\tr(W_T) 
= \tr\left[\exp(-\eta\delta \Phi^{\ast}(Y_0+\cdots+Y_{T-1}))\right]
\geq \exp(-\eta\delta\lambda_{NM}(\Phi^{\ast}(Y_0+\cdots+Y_{T-1}))),
\]
and therefore
\[
\lambda_{NM}\( \Phi^{\ast}\(\frac{Y_0 + \cdots + Y_{T-1}}{T}\) \)
\geq \exp(-\eta) - \frac{\ln(NM)}{\eta\delta T}.
\]
Substituting the specified value of $T$, and using the fact that
$\exp(-\eta) - \frac{\eta^2}{3} \geq 1 - \eta$ (which holds for 
any $\eta\in[0,1]$), we have
\[
\lambda_{NM}\( \Phi^{\ast}\(\frac{Y_0 + \cdots + Y_{T-1}}{T}\) \)
\geq 1 - \eta = 1 - \frac{\varepsilon\gamma}{2}.
\]
Therefore
\[
\lambda_{NM}(\Phi^{\ast}(Y)) \geq (1 + \varepsilon)\(1 -
\frac{\varepsilon\gamma}{2}\)
\geq 1,
\]
and so $\Phi^{\ast}(Y) \geq \I_{\Y\otimes\X}$ as required.

We therefore have that the algorithm works correctly, modulo the
precision issues to be discussed in the section following this one.
It remains only to observe that it can be implemented in
$\class{NC}$ (meaning that it results in an
$\class{NC}(\mathit{poly})$ computation when composed with the first
two steps of the simulation).
Some of the details required to argue this can be found below;
but at a high level one sees that each iteration of the loop in step 3
can be performed with high precision in $\class{NC}$, and the total
number of iterations required is polynomial in $n$ (and therefore
polylogarithmic in the size of $Q$).

\section{Precision requirements for the simulation} \label{sec:precision}

We now discuss the precision that is required for the simulation
to yield a correct answer.
It will turn out that the simulation is not particularly sensitive to
errors, and one could in fact afford to take exponentially more
precision than is required and still be within the class
$\class{NC}(\mathit{poly})$.

The first step of the simulation, in which an explicit description of
$\ket{\psi}$ and $\Pi$ is obtained, can be performed exactly
in $\class{NC}(\mathit{poly})$ as has already been observed.
So, let us move on to the second step, in which $\ket{\psi}$ and $\Pi$
are processed to obtain a well-conditioned interactive measurement
operator~$Q$.
This step requires the approximation of one singular value
decomposition (to approximate the Schmidt decomposition of
$\ket{\psi}$), along with a few other operations that can be performed
exactly or with high precision in $\class{NC}$.

For the moment let us denote by $\widetilde{Q}$ the actual operator
that is computed by an $\class{NC}$ implementation of this step, as
opposed to the true operator $Q$ that would be output by an idealized,
exact complex number algorithm.
By computing the singular value decomposition to high precision, it is
possible to take such an approximation so that
\[
\norm{Q - \widetilde{Q}} < 2^{-2^{\mathit{poly}(n)}},
\]
where $\mathit{poly}$ denotes any polynomial of our choice.
It is not difficult to prove that the quantity
$|\mu(Q) - \mu(\widetilde{Q})|$ is upper-bounded by $N$ times
$\snorm{Q - \widetilde{Q}}$, and therefore we may take $\widetilde{Q}$
so that
\[
\abs{\mu(Q) - \mu(\widetilde{Q})} < 2^{-2^{\mathit{poly}(n)}},
\]
again for $\mathit{poly}$ denoting any polynomial of our choice.
We do not need this much precision: we only need a
$1/\mathit{poly}(n)$ separation between the values of
$\mu(\widetilde{Q})$ for the cases $x\in A_{\yes}$ and $x\in A_{\no}$,
which requires that $\snorm{Q - \widetilde{Q}}$ is exponentially
(rather than double-exponentially) small in $n$.
So, to be concrete, we may decide to take sufficient precision so that
\[
\norm{Q - \widetilde{Q}} < \varepsilon\gamma/N,
\]
and therefore
\[
\abs{\mu(Q) - \mu(\widetilde{Q})} < \varepsilon \gamma.
\]
Thus,
\begin{equation} \label{eq:approximate-Q-bounds}
\begin{split}
x\in A_{\yes} \quad & \Rightarrow\quad \mu(\widetilde{Q}) 
\geq (1 + 3\varepsilon)\gamma\\[2mm]
x\in A_{\no} \quad & \Rightarrow\quad \mu(\widetilde{Q}) 
\leq (1 - 3\varepsilon)\gamma.
\end{split}
\end{equation}

Hereafter we will return to writing $Q$ rather than $\widetilde{Q}$,
with the understanding that $Q$ now represents an approximation that
is stored by our algorithm.
In addition, we will assume that $\sqrt{Q}$, and therefore $Q^{-1/2}$
as well, has Gaussian rational entries and is known precisely.
This assumption is easily met by replacing $Q$ with the square of a
high precision approximation to $\sqrt{Q}$.
The point of this assumption is that we avoid having to consider an
additional error term every time $\sqrt{Q}$ or $Q^{-1/2}$ is involved
in any computation.
(There is no reason beyond simplifying the analysis to make this
assumption.)

Now suppose that the algorithm described in
Section~\ref{sec:Arora-Kale} is performed with limited precision.
Consider first the case that the algorithm accepts, and let
$\rho = \rho_t$ denote the density operator that is stored by the
algorithm on the iteration $t$ in which acceptance occurs.
Note that it is not necessary to view that $\rho$ is an approximation
of something else: the simple fact that $\rho$ causes acceptance will
allow us to conclude that $x\in A_{\yes}$ in a similar way to the
error-free analysis.

Specifically, we will construct an operator
$X\in\pos{\Y\otimes\X}$ such that $\Phi(X) \leq \I_{\Y}$ and 
$\tr(X) \geq (1 - 2 \varepsilon)\gamma$.
As before we will let $R_0 = Q^{-1/2}\rho Q^{-1/2}$ and
$P_0 = \tr_{\Y}(R_0)$.
This time, we must consider that the spectral decomposition is
approximate.
Let us write
\[
\widetilde{P}_0 = \sum_{j = 1}^N \lambda_j \ket{x_j}\!\bra{x_j}
\]
to denote the approximate value of the spectral decomposition, so that
$\snorm{P_0 - \widetilde{P}_0}$ represents the error in this
approximation, and let us assume that sufficient accuracy is taken so
that
\begin{equation} \label{eq:accept-approx}
\norm{P_0 - \widetilde{P}_0} < \frac{\delta}{4 N \norm{Q^{-1}}}.
\end{equation}
As above, this is significantly less accuracy than is available---for
we could take
\[
\norm{P_0 - \widetilde{P}_0} < 2^{-2^{\mathit{poly}(n)}}
\]
if it were advantageous.
Continuing on as before, let
\[
P_1 = \frac{1}{\gamma}\sum_{j\in S} \ket{x_j}\!\bra{x_j} + 
\sum_{j\not\in S}\lambda_j \ket{x_j}\!\bra{x_j},
\]
let $R_1\in\pos{\Y\otimes\X}$ to be an extension of $P_1$
for which $\fid(R_0,R_1) = \fid(P_0,P_1)$, and let
\[
X = \gamma \sqrt{Q} R_1 \sqrt{Q}.
\]

We have that $X\geq 0$ and $\Phi(X)\leq \I_{\X}$ as before, and to
establish a lower bound on $\tr(X)$ we again use the fact that 
\[
1 - \tr\(\sqrt{Q} R_1 \sqrt{Q}\) \leq \norm{Q} \norm{R_0 - R_1}_1,
\]
as well as the bound
\[
\norm{R_0 - R_1}_1 
\leq \sqrt{2 \tr(P_0)^2 + 2 \tr(P_1)^2 - 4 \fid(P_0,P_1)^2}.
\]
Based on the bound \eqref{eq:accept-approx}, it follows that
\[
\tr(P_1)^2 \leq \tr(P_0)^2 + \delta
\]
and
\[
\fid(P_0,P_1)^2 \geq \(\tr(P_0) - \delta \norm{Q^{-1}}\)^2 + \delta,
\]
and therefore
\[
\norm{R_0 - R_1}_1 \leq \sqrt{14 \delta} \snorm{Q^{-1}}.
\]
It follows that
\[
1 - \tr\(\sqrt{Q} R_1 \sqrt{Q}\) < 2\varepsilon,
\]
and therefore
\[
\tr(X) = \gamma \tr\(\sqrt{Q} R_1 \sqrt{Q}\)
\geq (1 - 2\varepsilon)\gamma
\]
as required.

Now let us consider the case that the algorithm rejects, and 
in particular let us focus on the operators $Y_0,\ldots,Y_{T-1}$ that
are computed over the course of the algorithm.
As for the case of acceptance, these operators are not viewed as
approximations to anything: the fact that these operators exist and
cause rejection in the algorithm is enough to conclude that
$x\in A_{\no}$.
Let us continue to write
\[
W_{t+1} = \exp\(-\frac{\varepsilon\gamma\delta}{2} \Phi^{\ast}(Y_0 +
\cdots + Y_t)\)
\]
and $\rho_t = W_t/\tr(W_t)$ for each $t = 0,\ldots,T-1$;
but we must keep in mind that the algorithm only computes
approximations of these operators.
The algorithm must also approximate the spectral decomposition of
each $\Phi(\rho_t)$, where the source of errors in this case comes
from both the spectral decomposition computation and the fact
that $\rho_t$ is approximated.

Now, in the error free analysis, the conditions
\[
\tr(Y_t) \leq \gamma,\quad\quad
\ip{\rho_t}{\Phi^{\ast}(Y_t)} = 1,
\quad\quad\text{and}\quad\quad
\norm{\delta \Phi^{\ast}(Y_t)} \leq 1
\]
were proved, and these conditions allowed us to conclude that $Y$
satisfies
\[
\Phi^{\ast}(Y) \geq \I_{\Y\otimes\X}
\quad\quad\text{and}\quad\quad
\tr(Y) \leq (1 + \varepsilon)\gamma.
\]
If we follow precisely the same proof, but with the condition
$\ip{\rho_t}{\Phi^{\ast}(Y_t)} = 1$ replaced by
\[
\ip{\rho_t}{\Phi^{\ast}(Y_t)} \geq 1 - \alpha
\]
for some choice of $\alpha>0$, we once again find immediately that 
$\tr(Y) \leq (1 + \varepsilon)\gamma$.
This time we have
\[
\lambda_{NM}\( \Phi^{\ast}\(\frac{Y_0 + \cdots + Y_{T-1}}{T}\) \)
\geq (1 - \alpha)\exp(-\eta) - \frac{\ln(NM)}{\eta\delta T},
\]
but under the assumption $\alpha < \eta^2/12$, say, it follows again
that $\Phi^{\ast}(Y) \geq \I_{\Y\otimes\X}$.

Thus, given that the conditions $\tr(Y_t) \leq \gamma$ and
$\norm{\delta\Phi^{\ast}(Y_t)}\leq 1$ follow from an inspection of the
algorithm as before, it suffices to compute the matrix exponentials
and spectral decompositions with sufficient accuracy that
$\ip{\rho_t}{\Phi^{\ast}(Y_t)} > 1 - \eta^2/12$.
This is easily done: as the argument of the matrix exponentials have
norm bounded by $T$, one is able to compute both the matrix
exponentials and the spectral decompositions with exponentially
greater accuracy in $\class{NC}$ than is required.

\section{Conclusion}
\label{sec:conclusion}

We have proved that $\class{QIP}(2)\subseteq\class{PSPACE}$ using a
semidefinite programming formulation of the maximum acceptance
probability of two-message quantum interactive proof systems, along
with the multiplicative weights update method for verifying these
values.
An obvious question remains: can this method be extended, or some
other method devised, to prove $\class{QIP} = \class{PSPACE}$?

\subsection*{Acknowledgments}

The research presented in this paper was supported by Canada's NSERC,
MITACS, the Canadian Institute for Advanced Research, and the US Army
Research Office and National Security Agency; and was partially done
while Rahul Jain was a member of the Institute for Quantum Computing
and the School of Computer Science at the University of Waterloo.

\bibliographystyle{alpha}


\end{document}